\documentclass[journal]{IEEEtran}
\usepackage{amsmath,amsfonts,amssymb,amsthm}
\usepackage{algorithm}
\usepackage{algorithmic}
\usepackage{array}
\usepackage{booktabs}
\usepackage{multirow}
\usepackage{cite}
\usepackage[caption=false,font=footnotesize,labelfont=rm,textfont=rm]{subfig}
\usepackage{tabularx}
\usepackage{textcomp}
\usepackage{stfloats}
\usepackage{placeins}
\usepackage{url}
\usepackage{verbatim}
\usepackage{graphicx}
\usepackage{xcolor}
\usepackage[percent]{overpic}
\def\BibTeX{{\rm B\kern-.05em{\sc i\kern-.025em b}\kern-.08em
    T\kern-.1667em\lower.7ex\hbox{E}\kern-.125emX}}

\newtheorem{proposition}{Proposition}
\newtheorem{corollary}{Corollary}
\newtheorem{lemma}{Lemma}

\begin{document}

\title{Integrated Sensing and Communication via Channel Knowledge Maps Built from Beam Feedback Coefficients}

\author{Chongda Huang,~Yue Xiao,~\IEEEmembership{Member,~IEEE,} 
        Gang Wu,~\IEEEmembership{Senior Member,~IEEE,}
        Vladimir Poulkov,~\IEEEmembership{Senior Member,~IEEE,}
        Xianfu Lei,~\IEEEmembership{Member,~IEEE}
        and Wei Xiang,~\IEEEmembership{Senior Member,~IEEE}%

\thanks{The collaboration was confined to fundamental academic research intended for open scholarly publication and dissemination. Wei Xiang’s contribution was not supported by the grant and was limited to conceptual framing, academic consultation, and manuscript review. No non-public datasets, infrastructure access, proprietary source code, intellectual property, controlled technology, or ARC-funded resources were transferred or made available through this collaboration.}

\thanks{C. Huang, Y. Xiao, and G. Wu are with the National Key Laboratory of Wireless Communications, University of Electronic Science and Technology of China, Chengdu 611731, China (e-mail: chongdahuang@std.uestc.edu.cn, xiaoyue@uestc.edu.cn).}%

\thanks{Vladimir Poulkov is with the Faculty of Telecommunications, Technical
University of Sofia, 1000 Sofia, Bulgaria (e-mail: vkp@tu-sofia.bg).}%

\thanks{X. Lei is with the School of Information Science and Technology, Southwest Jiaotong University, Chengdu 610031, China (e-mail: xflei@swjtu.edu.cn).}%

\thanks{Wei Xiang is with the School of Computing, Engineering and Mathematical
Sciences, La Trobe University, Melbourne, VIC 3086, Australia (e-mail:
w.xiang@latrobe.edu.au).}%
}

\maketitle

\begin{abstract}
Channel knowledge maps (CKMs) support wireless localization by linking channel features to user locations. Detailed channel fingerprints, however, can require substantial storage. This paper proposes a CKM-based localization framework that uses beam feedback coefficients collected during communication. These coefficients serve as compact location fingerprints within the beam management process, without requiring full channel feedback or additional sensing measurements. Using a coarse position prior, the base station searches nearby CKM entries and refines the location estimate through local interpolation. We analyze the relationship between spatial fingerprint variation and localization accuracy, and evaluate the effect of feedback quantization. Simulations show favorable storage--accuracy trade-offs relative to dense and sparsified spectral fingerprinting in the tested multipath environments.
\end{abstract}

\begin{IEEEkeywords}
Integrated sensing and communication, channel knowledge map, beam feedback coefficients, angle-delay fingerprints, user localization.
\end{IEEEkeywords}

\section{Introduction}
\IEEEPARstart{C}{ommunication} measurements reflect propagation geometry and can support localization and environment-aware services \cite{ref37,ref38}. Integrated sensing and communication (ISAC) reuses communication infrastructure, waveforms, and processing for sensing \cite{ref30,ref28}, extending joint communication--radar formulations \cite{ref27,ref32} to mobile networks \cite{ref29}. Here, compact beam-management feedback supplies the channel information for user localization.

Channel knowledge maps (CKMs) associate locations with environment-dependent channel descriptors for environment-aware operation \cite{ref12}; their construction and applications are reviewed in \cite{ref4}. Statistical map construction has been studied using expectation--maximization \cite{ref13}, while Long et al. use a line-of-sight (LoS) map to select anchors by minimizing the Bayesian Cram\'{e}r--Rao lower bound of positioning error \cite{ref14}. Channel state information (CSI) fingerprinting includes deep-learning formulations \cite{ref6}, such as DeepFi \cite{ref20} and convolutional fingerprint processing in ConFi \cite{ref21}. These methods use detailed channel measurements, which may be unavailable at the localization processor. Their storage requirements grow with the number of reference locations, antennas, and frequency samples. We instead use compact communication-side coefficients as CKM entries.

Multipath-rich and non-line-of-sight (NLOS) propagation further motivates signature-based localization \cite{ref36}. Geometric methods infer position from estimated directions, delays, and their association with physical paths \cite{ref16,ref22,ref17}. Multiple signal classification (MUSIC) provides high-resolution direction estimation under an appropriate signal and array model \cite{ref39}; spatial smoothing mitigates rank loss caused by coherent paths \cite{ref40}. Neither operation alone resolves the association between a reflected path and the physical user. Fingerprinting uses the aggregate propagation signature to distinguish locations without identifying each path \cite{ref7,ref8}. Hu and Chen developed angle delay power spectrum (ADPS) fingerprints and adjustable-weight weighted $K$-nearest neighbor (AWKNN) matching for orthogonal frequency-division multiplexing (OFDM) localization \cite{hu2024adps}. ADPS-AWKNN exploits angle-delay structure and requires full channel measurements to construct its spectral fingerprints.

Communication feedback offers another way to use this structure. New Radio (NR) CSI reporting describes spatial and frequency-dependent channel information \cite{ref1,ref15}, while beam management maintains directional communication links \cite{ref19}. Precoder-feedback overhearing has been shown to reveal location information \cite{ref5}, and adaptive CSI feedback has been investigated for ISAC operation \cite{ref34}. Commercial NR CSI has also been used for indoor positioning \cite{ref24,ref25}. The remaining question is whether compact reports retain enough spatial information for accurate localization, and how coefficient precision and map sampling affect the result.

Table~\ref{tab:localization_interfaces} compares the localization inputs and fingerprint representations of the selected methods. Both ADPS fingerprinting and the proposed method use aggregate multipath signatures without explicit path separation. The proposed CKM directly stores and matches the selected beam-feedback indices and complex coefficients.

\begin{table*}[!t]
\centering
\caption{Comparison of Localization Inputs and Fingerprint Representations}
\label{tab:localization_interfaces}
\small
\renewcommand{\arraystretch}{1.25}
\setlength{\tabcolsep}{5pt}
\begin{tabularx}{\textwidth}{@{}>{\raggedright\arraybackslash}p{0.24\textwidth}
*{4}{>{\centering\arraybackslash}X}@{}}
\toprule
Method
& Direct use of communication beam feedback
& No full CSI required for localization feature generation
& No explicit path separation or association
& Sparse index--value fingerprint\\
\midrule
Geometry-based localization~\cite{ref16}
& --- & $\triangle$ & --- & N/A\\
Dense ADPS-AWKNN~\cite{hu2024adps}
& --- & --- & $\checkmark$ & ---\\
Top-\(M\) sparse ADPS-AWKNN (extension)
& --- & --- & $\checkmark$ & $\checkmark$\\
Proposed feedback-coefficient CKM
& $\checkmark$ & $\checkmark$ & $\checkmark$ & $\checkmark$\\
\bottomrule
\end{tabularx}

\vspace{3pt}
\begin{minipage}{\textwidth}
\footnotesize
$\checkmark$: the stated property holds in the compared formulation;
---: the stated property is absent in that formulation;
$\triangle$: depends on the measurement and parameter-extraction configuration;
N/A: no fingerprint map is used.
The full-CSI column covers localization feature generation, including path-parameter extraction for the geometric method and query-spectrum construction for both ADPS variants. The Top-\(M\) extension is evaluated in Section~V.
For the proposed method, features are formed from already available beam-feedback indices and coefficients; communication-side feedback generation follows the assumptions in Section~II.
\end{minipage}
\end{table*}

The network selects angle-delay ports using communication-side channel knowledge, and the user equipment (UE) estimates and reports their coefficients. The network builds the CKM from location-labeled reports and matches new reports for localization. The reciprocity-aided model in Section~II uses the same measurement and feedback chain for communication adaptation and localization. The number of retained ports controls the map size.

A coarse location prior limits the search to a nearby reference region. The estimator ranks candidates by a phase-invariant coefficient distance and uses regularized nonnegative least squares (NNLS) for local interpolation, with explicit fallback rules. The prior is supplied externally, for example by a network positioning service or a previous position estimate. This local formulation addresses feature ambiguity without assuming globally unique fingerprints. The analysis traces perturbations from the channel to the coefficients and identifies the smoothness and local identifiability conditions for stable position estimation.

The main contributions are:
\begin{itemize}
\item A compact CKM is constructed from selected angle-delay ports and complex feedback coefficients. A port-measurement model connects communication-side observations to these entries and allows localization to use the reports directly.
\item Coarse-prior-guided matching and interpolation exploit the retained spatial information. A phase-projected identifiability condition explains local stability, while finite-map and quantization bounds identify conditions that preserve nearby matches.
\item We compare the proposed method's storage--accuracy trade-off with dense and Top-\(M\) sparse ADPS-AWKNN in two scenes. The proposed method yields lower localization root-mean-square error (RMSE) at similar storage budgets and uses less storage in tested configurations with similar accuracy. We also evaluate map density, port budget, quantization, and coefficient estimation error.
\end{itemize}

Section~II describes the communication-side framework and feedback generation. Section~III specifies the CKM and localization estimator. Section~IV develops the local geometry and deterministic error characterization. Section~V presents the numerical results, and Section~VI concludes the paper.

\section{Beam-Feedback-Based ISAC Framework}
\label{sec:framework}
Communication-side channel estimation supplies the coefficients reported by the UE to the base station (BS). Fig.~\ref{fig:ckm_diagram} shows the beam-management and feedback process, including channel state information reference signal (CSI-RS) configuration and transmission, UE channel estimation, outer-layer beam description and inner-layer coefficient extraction, and uplink reporting.

\begin{figure*}[!t]
\centering
\includegraphics[width=\textwidth]{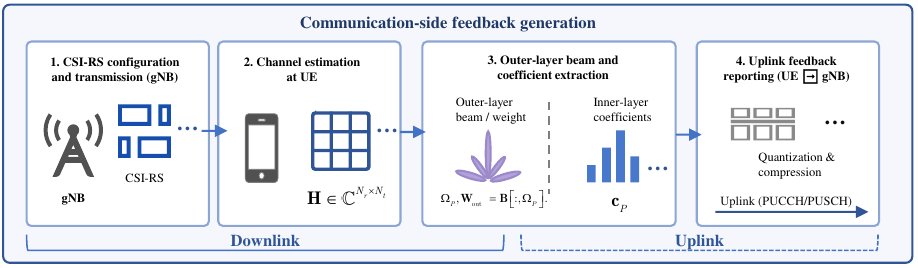}
\caption{Communication-side beam management and feedback generation.}
\label{fig:ckm_diagram}
\end{figure*}

The communication model is motivated by the Further enhanced Type II port selection codebook in Release-17 Technical Specification (TS) 38.214, clause 5.2.2.2.7 \cite{refR17}. That specification defines port-selection reporting; the joint discrete Fourier transform (DFT) atoms and scalar quantizer below are our numerical realization. The BS is assumed to configure ports using calibrated uplink-derived support knowledge. In the experiments, the available downlink channel map supplies that support and the receive combiner directly. Uplink-to-downlink support transfer, calibration, and combiner acquisition are therefore idealized. The main comparisons use ideal floating-point coefficients, and Section~\ref{subsec:quantization} evaluates their finite-precision quantization. A separate experiment in Fig.~\ref{fig:coefficient_estimation} adds coefficient estimation error while keeping the support and combiner fixed.

Let \(\mathbf p\in\mathcal A\subset\mathbb R^2\) denote the UE position in a multiple-input multiple-output (MIMO) OFDM system. At frequency sample \(k\), the channel is \(\mathbf H_k(\mathbf p)\in\mathbb C^{N_r\times N_t}\), with
\begin{equation}
\begin{split}
\mathbf H_k(\mathbf p)
={}&\sum_{\ell=1}^{L}\beta_\ell(\mathbf p)
 e^{-j2\pi f_k\tau_\ell(\mathbf p)}\\
&\quad\cdot\mathbf a_r(\boldsymbol\phi_\ell^r(\mathbf p),k)
 \mathbf a_t^H(\boldsymbol\phi_\ell^t(\mathbf p),k).
\end{split}
\label{eq:Hk}
\end{equation}

Here \(\beta_\ell\), \(\tau_\ell\), and \(\boldsymbol\phi_\ell^{r,t}\) are the complex path gain, delay, and arrival/departure angle vectors. We write \(\mathbf H(\mathbf p)=\{\mathbf H_k(\mathbf p)\}_{k=1}^{K}\). In the channel synthesis, \(f_k=(k-1)\Delta f\) is a baseband frequency offset; the carrier is \(f_c\). The gain is \(\beta_\ell=\sqrt{10^{P_\ell^{\rm dB}/10}/K}\exp(j\psi_\ell)\), where \(P_\ell^{\rm dB}\) and \(\psi_\ell\) are the supplied path power and phase. The synthesis convention uses \(\mathbf a_r\mathbf a_{t,\mathrm{syn}}^T\), so the transmit response in \eqref{eq:Hk} is defined by \(\mathbf a_t=\mathbf a_{t,\mathrm{syn}}^*\). The supplied path phase already includes the carrier contribution.

The receive covariance and unit-norm dominant combining vector are:
\begin{equation}
\mathbf R_r=\frac1K\sum_{k=1}^{K}\mathbf H_k\mathbf H_k^H,
\qquad
\mathbf u=\operatorname{eig}_{\max}(\mathbf R_r)\in\mathbb C^{N_r}.
\label{eq:receive_combining}
\end{equation}

The receive-combined effective channel is
\begin{equation}
\mathbf X=
[\mathbf H_1^H\mathbf u,\ldots,\mathbf H_K^H\mathbf u]
\in\mathbb C^{N_t\times K}.
\label{eq:effective_channel}
\end{equation}

The dependence on \(\mathbf p\) is omitted where unambiguous. This operation retains the dominant receive mode rather than the complete multi-receive-antenna channel.

For an \(N_x\times N_y\) transmit uniform planar array (UPA), \(N_t=N_xN_y\), define the unitary DFT matrix
\begin{equation}
[\mathbf F_n]_{a,b}=n^{-1/2}e^{-j2\pi ab/n},
\quad a,b=0,\ldots,n-1,
\label{eq:dft}
\end{equation}
and use the antenna ordering specified below
\begin{equation}
\mathbf W_\theta=\mathbf F_{N_x}\otimes\mathbf F_{N_y},
\qquad
\mathbf G=\mathbf W_\theta^H\mathbf X\mathbf F_K^*
\in\mathbb C^{N_t\times K}.
\label{eq:angle_delay_transform}
\end{equation}

The symbol \(*\) denotes elementwise conjugation. The spatial dictionary and channel vectorization must use the same antenna ordering. This representation follows the beamspace view of multiantenna channels \cite{sayeed2002deconstructing}, whose sparse signatures have also been studied for localization \cite{ref18}. The frequency transform represents sampled spectral structure; its bins need not resolve individual physical paths.

Let \(D=N_tK\) and \(\mathbf g=\operatorname{vec}(\mathbf G)\in\mathbb C^D\), using column-major vectorization. For \(P\le D\), retain the \(P\) largest values of \(|g_j|^2\). Equal energies are ordered by their original linear index, and the retained indices are subsequently sorted in ascending order:
\begin{equation}
\Omega_P=\{j_1<\cdots<j_P\},\qquad
\mathbf c_P=[g_{j_1},\ldots,g_{j_P}]^T.
\label{eq:port_selection}
\end{equation}

The corresponding space-frequency atoms are columns of
\(\mathbf B=\mathbf F_K\otimes\mathbf W_\theta\in\mathbb C^{D\times D}\).
Hence the outer descriptor can be represented without storing a dense matrix:
\begin{equation}
\begin{aligned}
\mathbf W_{\mathrm{out}}&=\mathbf B_{:,\Omega_P}\in\mathbb C^{N_tK\times P},\\
\mathbf V_{\mathrm{sb}}&=\mathbf c_P,\qquad
\mathbf F=(\Omega_P,\mathbf c_P).
\end{aligned}
\label{eq:reported_feature}
\end{equation}

This is a joint spatial--frequency outer representation. With spatial index \(s\in\{1,\ldots,N_t\}\) and frequency-DFT index \(d\in\{1,\ldots,K\}\), port \(j=s+N_t(d-1)\) follows column-major vectorization. The corresponding one-based DFT entry is \([\mathbf F_K]_{k,d}=K^{-1/2}e^{-j2\pi(k-1)(d-1)/K}\). Its frequency-\(k\) spatial weight is
\begin{equation}
\mathbf b_{j,k}=[\mathbf F_K]_{k,d}[\mathbf W_\theta]_{:,s}.
\label{eq:port_weight}
\end{equation}

Thus \(\mathbf W_{\mathrm{out}}\) stacks \(K\) spatial-weight blocks, rather than representing a single narrowband beam matrix. The transform conventions give
\begin{equation}
\operatorname{vec}(\mathbf X)=\mathbf B\mathbf g,
\qquad \operatorname{vec}(\hat{\mathbf X})=\mathbf W_{\mathrm{out}}\mathbf c_P.
\label{eq:dictionary_reconstruction}
\end{equation}

All map entries share the same dictionary; \(\Omega_P\) identifies the selected outer atoms and \(\mathbf c_P\) gives their weights. The support specifies the retained spatial-frequency coordinates, while the coefficients preserve their relative amplitude and phase. Explicit CSI-feedback designs involve additional representation and reporting choices \cite{ref26}; floating-point coefficients serve as the baseline for the finite-precision experiment in Section~\ref{subsec:quantization}.

To connect these coefficients to communication measurements, suppose the BS configures the selected atoms and assigns each port a separable pilot resource. After pilot despreading, the UE observes
\begin{equation}
\mathbf y_{j,k}=\sqrt{E_j}\mathbf H_k\mathbf b_{j,k}+\mathbf n_{j,k},
\qquad
t_j=\sum_{k=1}^{K}\mathbf u^H\mathbf y_{j,k},
\label{eq:pilot_observation}
\end{equation}
where \(E_j>0\) is the known pilot-energy factor. The unitary dictionaries imply \(\sum_k\|\mathbf b_{j,k}\|^2=1\). Orthogonal pilot codes or separate resources are assumed so that other ports have been removed by despreading. The same unit-norm combiner is used across the \(K\) samples. Since \(\mathbf X_{:,k}=\mathbf H_k^H\mathbf u\),
\begin{equation}
\begin{aligned}
t_j&=\sqrt{E_j}\sum_k[\mathbf F_K]_{k,d}
 \mathbf u^H\mathbf H_k[\mathbf W_\theta]_{:,s}+\nu_j\\
&=\sqrt{E_j}\,g_j^*+\nu_j,\qquad
\widetilde c_j=t_j^*/\sqrt{E_j}.
\end{aligned}
\label{eq:pilot_coefficient}
\end{equation}

Each coefficient is obtained by conjugating and energy-normalizing its observation, and is reported with its port index in the shared dictionary. The simulation supplies the full-channel combiner; practical acquisition requires receive-side training. Equation~\eqref{eq:pilot_observation} models the observation within the NR CSI-RS architecture \cite{ref2,ref23}, without specifying resource allocation.

For independent \(\mathbf n_{j,k}\sim\mathcal{CN}(\mathbf0,\sigma_n^2\mathbf I)\) and fixed \(\|\mathbf u\|_2=1\), the sum \(\nu_j=\sum_k\mathbf u^H\mathbf n_{j,k}\) has variance \(K\sigma_n^2\). Thus \eqref{eq:pilot_coefficient} gives
\begin{equation}
\widetilde c_j=c_j+\frac{\nu_j^*}{\sqrt{E_j}},\qquad
\mathbb E|\widetilde c_j-c_j|^2=\frac{K\sigma_n^2}{E_j}.
\label{eq:coefficient_observation_variance}
\end{equation}
The factor \(K\) follows from the unaveraged sum in \eqref{eq:pilot_observation}; DFT normalization is included in the transmitted atom. Correct uplink delivery transmits these estimates with their UE-side estimation error.

The amplitude--phase quantizer is applied after port selection. Here \(c_j=g_j\) is the ideal coefficient, \(\widetilde c_j\) is its measurement estimate, and \(\widehat c_j\) is the reconstructed quantized coefficient. The main comparisons and quantization experiment use \(\widetilde c_j=c_j\); the separate estimation-error experiment uses noisy coefficients without quantization. For one report, with \(j\) ranging over its selected support, define
\begin{equation}
A=\max_j|\widetilde c_j|,\quad r_j=|\widetilde c_j|/A,\quad
\phi_j=\operatorname{mod}(\arg\widetilde c_j,2\pi).
\label{eq:amplitude_phase_normalization}
\end{equation}

If \(A=0\), set every \(r_j=0\); for \(\widetilde c_j=0\), set \(\phi_j=0\). With amplitude and phase budgets \(b_a,b_\phi\), let \(L_a=2^{b_a}-1\) and \(L_\phi=2^{b_\phi}\). The quantization and reconstruction rules are
\begin{equation}
\begin{aligned}
q_{a,j}&=\min\{L_a,\max\{0,\lfloor L_a r_j+1/2\rfloor\}\},\\
q_{\phi,j}&=\operatorname{mod}\!\left(
 \left\lfloor\frac{L_\phi\phi_j}{2\pi}+\frac12\right\rfloor,L_\phi\right),\\
\hat c_j&=A\frac{q_{a,j}}{L_a}
 \exp\!\left(j\frac{2\pi q_{\phi,j}}{L_\phi}\right).
\end{aligned}
\label{eq:amplitude_phase_quantizer}
\end{equation}

Amplitude levels include both zero and one. Half-step ties round toward the increasing code; phase codes then wrap modulo \(L_\phi\). A component left unquantized retains its float64 normalized amplitude or phase. All selected indices are retained, including those whose reconstructed amplitude is zero. The report contains the index array, float64 scale \(A\), quantization indices, and bit allocations. Payload accounting uses 11 bits per index for \(D=1664\) and 64 bits for \(A\). Joint quantization uses \(b_a+b_\phi\) bits per coefficient, excluding these fields. This scalar quantizer is separate from the standardized codebook quantizers.

\section{Feedback-Coefficient CKM and Localization}
\label{sec:localization}
The CKM associates reference locations with compact feedback reports. Localization combines coefficient matching with local interpolation, as summarized in Algorithm~\ref{alg:localization}. Fig.~\ref{fig:ckm_framework} illustrates map construction and the localization procedure.

\begin{figure*}[!t]
\centering
\includegraphics[width=0.88\textwidth]{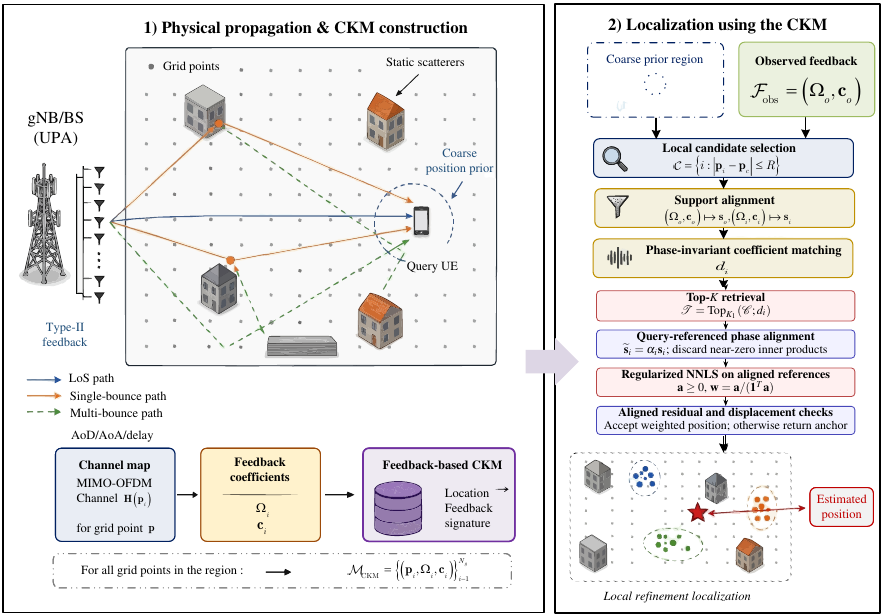}
\caption{Feedback-coefficient CKM construction and query-aligned localization.}
\label{fig:ckm_framework}
\end{figure*}

The online input consists of the reported support, coefficient values, and coarse position prior. In this section and Fig.~\ref{fig:ckm_framework}, \(\mathbf c_i\) denotes the coefficient vector in database entry \(i\): it contains \(\widetilde c_j\) in the floating-point case or \(\widehat c_j\) after reconstruction. These database-indexed vectors are distinct from the ideal port coefficient \(c_j=g_j\). The spatial map and its sampled realization are
\begin{equation}
\mathcal M:\mathbf p\mapsto\mathbf F(\mathbf p),\qquad
\mathcal M_{\mathrm{CKM}}=\{(\mathbf p_i,\Omega_i,\mathbf c_i)\}_{i=1}^{N_g}.
\label{eq:ckm_database}
\end{equation}

The position-indexed map may be discontinuous where the selected support changes. Define a sparse embedding \(\mathbf s(\mathbf p)\in\mathbb C^D\) by placing the retained coefficients at \(\Omega_P(\mathbf p)\) and zeros elsewhere. Only the selected indices and coefficients are stored; the sparse embedding supplies common coordinates for matching.

For a query \(\mathbf F^{\mathrm{obs}}\), a coarse prior \(\mathbf p_c\), and search radius \(R\), the candidate set is
\begin{equation}
\mathcal C=\{i:\|\mathbf p_i-\mathbf p_c\|\le R\}.
\label{eq:candidate_set}
\end{equation}

During leave-one-out evaluation, the query reference and identified raw-channel duplicates are excluded. In an operational query, this exclusion is unnecessary unless the query itself has been inserted into the reference database.

For two nonzero sparse embeddings \(\mathbf s_o\) and \(\mathbf s_i\), matching uses
\begin{equation}
d_i=
\frac{\sqrt{\max\{\|\mathbf s_o\|^2+\|\mathbf s_i\|^2
                  -2|\mathbf s_o^H\mathbf s_i|,0\}}}
{\max\{(\|\mathbf s_o\|+\|\mathbf s_i\|)/2,\epsilon_{\mathrm{mach}}\}}.
\label{eq:feature_distance}
\end{equation}

This is the minimum Euclidean residual over a common phase rotation, normalized by the mean feature norm. The inner product is computed on the intersection of supports, while coefficients on nonoverlapping ports contribute to the residual. Equation~\eqref{eq:feature_distance} therefore jointly measures support mismatch and coefficient differences, including candidates with disjoint supports.

Phase minimization removes the arbitrary phase of the dominant receive eigenvector during ranking while preserving relative coefficient magnitudes and differences in feature norms. Candidates are sorted by increasing \(d_i\), with database order resolving ties. With retrieval size \(K_{\mathrm r}=8\) and \(K_1=\min\{K_{\mathrm r},|\mathcal C|\}\), the selected set is
\begin{equation}
\mathcal T=\operatorname{Top}_{K_1}(\mathcal C;d_i).
\label{eq:topk_set}
\end{equation}

The first retrieved point is the anchor \(\mathbf p_a\). Retain \(\mathcal T_+=\{i\in\mathcal T:|\mathbf s_i^H\mathbf s_o|>10^{-12}\|\mathbf s_i\|\|\mathbf s_o\|\}\), excluding zero-norm vectors, and align each retained reference to the query:
\begin{equation}
\alpha_i=\frac{\mathbf s_i^H\mathbf s_o}{|\mathbf s_i^H\mathbf s_o|},\qquad
\widetilde{\mathbf s}_i=\alpha_i\mathbf s_i,\quad i\in\mathcal T_+.
\label{eq:query_phase_alignment}
\end{equation}

Interpolation uses \(\widetilde{\mathbf z}_i=[\operatorname{Re}\widetilde{\mathbf s}_i^T,\operatorname{Im}\widetilde{\mathbf s}_i^T]^T\) and \(\widetilde{\mathbf Z}=[\widetilde{\mathbf z}_i]_{i\in\mathcal T_+}\). The query vector is represented as \(\mathbf z_o=[\operatorname{Re}\mathbf s_o^T,\operatorname{Im}\mathbf s_o^T]^T\).
We obtain the weights by regularizing the nonnegative least-squares formulation in \cite{lawson1995least}:
\begin{equation}
\mathbf a=\arg\min_{\mathbf a\ge0}
 \|\widetilde{\mathbf Z}\mathbf a-\mathbf z_o\|_2^2+\lambda\|\mathbf a\|_2^2,
\qquad \lambda=10^{-3}.
\label{eq:nnls}
\end{equation}

When \(\mathbf1^T\mathbf a>0\), the normalized weights, interpolated location, and reconstruction residual are
\begin{equation}
\begin{aligned}
\mathbf w&=\mathbf a/(\mathbf1^T\mathbf a),&
\mathbf p_{\mathrm{int}}&=\sum_{i\in\mathcal T_+}w_i\mathbf p_i,\\
r_{\mathrm{int}}&=
\frac{\|\mathbf z_o-\widetilde{\mathbf Z}\mathbf w\|_2}
{\max\{\|\mathbf z_o\|_2,\epsilon_{\mathrm{mach}}\}}.
\end{aligned}
\label{eq:interpolation}
\end{equation}

NNLS and acceptance use the same aligned matrix, without amplitude normalization. Independent rotations \(\mathbf s_i\mapsto e^{j\theta_i}\mathbf s_i\) and \(\mathbf s_o\mapsto e^{j\theta_o}\mathbf s_o\) rotate every aligned reference by \(e^{j\theta_o}\). The objective and residual are unchanged, as are the unique ridge-NNLS weights and final position. Amplitude scaling by \(\kappa\), however, changes the effective regularization to \(\lambda/|\kappa|^2\).

With fewer than four retained references, the anchor is returned. Otherwise, let
\begin{equation}
h_{\mathcal T}=\max_{i\in\mathcal T_+}\|\mathbf p_i-\bar{\mathbf p}_{\mathcal T}\|,
\quad
\bar{\mathbf p}_{\mathcal T}=|\mathcal T_+|^{-1}\sum_{i\in\mathcal T_+}\mathbf p_i.
\label{eq:patch_extent}
\end{equation}

The interpolated position is accepted if \(r_{\mathrm{int}}\) is finite, \(r_{\mathrm{int}}\le0.35\), and
\(\|\mathbf p_{\mathrm{int}}-\mathbf p_a\|\le\max\{1.5h_{\mathcal T},1~\mathrm m\}\).
Otherwise the estimator returns the anchor; it also does so if \(\mathbf1^T\mathbf a\le0\). The numerical distance clips its squared residual at zero before taking the square root. Zero-norm inputs use the denominator floor: two zero vectors tie at zero distance, and a zero query with nonzero references has no directional discrimination. If the output coordinates are nonfinite or the candidate set is empty, evaluation returns \(\mathbf p_c\). 

\begin{algorithm}[!t]
\caption{Feedback-Coefficient CKM Localization}
\label{alg:localization}
\begin{algorithmic}[1]
\STATE \textbf{Offline:} For each reference \(\mathbf p_i\), compute \(\mathbf X_i\), \(\mathbf G_i\), and \((\Omega_i,\mathbf c_i)\) using \eqref{eq:receive_combining}--\eqref{eq:port_selection}.
\STATE Store \((\mathbf p_i,\Omega_i,\mathbf c_i)\).
\STATE \textbf{Input:} Query \((\Omega_o,\mathbf c_o)\), prior \(\mathbf p_c\), radius \(R\), and reference map.
\STATE Initialize failure flag \(f=0\).
\STATE Form \(\mathcal C\) using \eqref{eq:candidate_set}; apply evaluation exclusions if required.
\IF{\(\mathcal C=\varnothing\)}
\STATE Set \(f=1\); return \((\mathbf p_c,f)\).
\ENDIF
\STATE Embed supports in common coordinates and compute \(d_i\) from \eqref{eq:feature_distance}.
\STATE Retrieve \(\mathcal T=\operatorname{Top}_{K_1}(\mathcal C;d_i)\), \(K_1=\min\{8,|\mathcal C|\}\); set anchor \(\mathbf p_a\).
\STATE Initialize \(\hat{\mathbf p}=\mathbf p_a\).
\STATE Form \(\mathcal T_+\), align references to the query using \eqref{eq:query_phase_alignment}, and build \(\widetilde{\mathbf Z}\).
\IF{\(|\mathcal T_+|\ge4\)}
\STATE Solve the regularized NNLS problem in \eqref{eq:nnls}.
\IF{\(\mathbf1^T\mathbf a>0\)}
\STATE Normalize \(\mathbf w=\mathbf a/(\mathbf1^T\mathbf a)\).
\STATE Compute \(\mathbf p_{\mathrm{int}}\), \(r_{\mathrm{int}}\), and \(h_{\mathcal T}\).
\IF{residual and anchor-displacement checks pass}
\STATE Set \(\hat{\mathbf p}=\mathbf p_{\mathrm{int}}\).
\ENDIF
\ENDIF
\ENDIF
\IF{\(\hat{\mathbf p}\) has a nonfinite component}
\STATE Set \(\hat{\mathbf p}=\mathbf p_c\) and \(f=1\).
\ENDIF
\STATE \textbf{Return} \((\hat{\mathbf p},f)\); a finite anchor fallback retains \(f=0\).
\end{algorithmic}
\end{algorithm}

For two supports of size at most \(P\), the distance can be evaluated by merging sorted indices in \(O(P)\) operations. Thus, candidate ranking does not require reconstruction of the \(N_r\times N_t\times K\) channel tensor. The NNLS problem has at most eight unknowns; its data can be restricted to the union of the retrieved and query supports.

\section{Local Geometry and Error Characterization}
\label{sec:theory}
\subsection{Local Identifiability}
Localization information arises because the channel parameters in \eqref{eq:Hk} vary with position \cite{ref11}. Consider a neighborhood of \(\mathbf p_0\) without path birth, disappearance, or a visibility boundary. For the real path parameter vector
\begin{equation}
\boldsymbol\theta_\ell=
[\tau_\ell,(\boldsymbol\phi_\ell^r)^T,
(\boldsymbol\phi_\ell^t)^T,\operatorname{Re}\beta_\ell,
\operatorname{Im}\beta_\ell]^T,
\end{equation}
continuous differentiability gives
\begin{equation}
\boldsymbol\theta_\ell(\mathbf p_0+\Delta\mathbf p)
=\boldsymbol\theta_\ell(\mathbf p_0)
+\mathbf J_{\theta_\ell}\Delta\mathbf p+o(\|\Delta\mathbf p\|).
\label{eq:path_expansion}
\end{equation}

Applying the chain rule to the array responses and path phases yields
\begin{equation}
\Delta\mathbf H_k
=\mathcal J_{H_k}[\Delta\mathbf p]+o(\|\Delta\mathbf p\|),
\qquad
\mathcal J_{H_k}=\sum_\ell
D_{\theta_\ell}\mathbf H_{k,\ell}\mathbf J_{\theta_\ell}.
\label{eq:channel_derivative}
\end{equation}

Here \(\mathcal J_{H_k}\) is a real-linear operator from \(\mathbb R^2\) to channel matrices.

The receive covariance differential is
\begin{equation}
D\mathbf R_r[\mathbf v]=\frac1K\sum_k
\big(D\mathbf H_k[\mathbf v]\mathbf H_k^H
+\mathbf H_kD\mathbf H_k[\mathbf v]^H\big).
\label{eq:receive_derivative}
\end{equation}

Suppose its largest eigenvalue is simple with a locally positive eigengap. A smooth phase convention can then be selected for \(\mathbf u\) locally. Subspace perturbation bounds control changes in its projector by changes in \(\mathbf R_r\) \cite{ref9,ref10}. Both the channel and the combiner contribute to the effective-channel differential:
\begin{equation}
D\mathbf X_{:,k}[\mathbf v]
=D\mathbf H_k[\mathbf v]^H\mathbf u+
\mathbf H_k^HD\mathbf u[\mathbf v].
\label{eq:effective_derivative}
\end{equation}

The second term accounts for the change in receive combining as the UE moves.

The dictionary is fixed, so
\begin{equation}
D\mathbf G[\mathbf v]
=\mathbf W_\theta^HD\mathbf X[\mathbf v]\mathbf F_K^*.
\label{eq:coefficient_derivative}
\end{equation}

If the \(P\)th and \((P+1)\)th largest port energies are separated by a positive gap, continuity preserves the selected support in a sufficiently small neighborhood. On this neighborhood, selection is a fixed linear operator. For \(P=D\), no support condition is needed. A smooth local phase representative of the selected coefficients therefore admits
\begin{equation}
\bar\Phi(\mathbf p_0+\Delta\mathbf p)
=\bar\Phi(\mathbf p_0)+\mathbf J_{\bar\Phi}(\mathbf p_0)
 \Delta\mathbf p+o(\|\Delta\mathbf p\|).
\label{eq:feature_expansion}
\end{equation}

Derivatives and singular values are evaluated using the real augmentation of complex features. The sparse report \(\Phi\) can switch coordinates across support boundaries; the smooth precursor \(\bar\Phi\) is defined only within the local chart. The resulting chain is
\begin{equation}
\begin{aligned}
\mathbf p&\longmapsto\mathbf H\longmapsto(\mathbf u,\mathbf X)
\longmapsto\mathbf G\\
&\longmapsto(\Omega_P,\mathbf c_P)
\longmapsto\text{coefficient matching}.
\end{aligned}
\label{eq:feature_chain}
\end{equation}

\begin{proposition}[Local continuity]
\label{prop:lipschitz}
Suppose the path and steering parameters are continuously differentiable, the dominant receive eigenvalue is separated, and the selected support is locally unchanged. On a sufficiently small neighborhood, a smooth phase representative \(\bar\Phi\) satisfies
\begin{equation}
\|\bar\Phi(\mathbf p_1)-\bar\Phi(\mathbf p_2)\|
\le L_\Phi\|\mathbf p_1-\mathbf p_2\|.
\label{eq:phi_lipschitz}
\end{equation}
\end{proposition}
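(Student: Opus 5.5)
The plan is to show that $\bar\Phi$ is continuously differentiable on a compact convex neighborhood of $\mathbf p_0$, and then apply the mean-value inequality. This gives \eqref{eq:phi_lipschitz} with $L_\Phi=\sup_{\mathbf p}\|\mathbf J_{\bar\Phi}(\mathbf p)\|_2$, the supremum taken over a closed ball $\bar B(\mathbf p_0,\rho)$ contained in the neighborhood where all hypotheses hold. Differentiability follows by composing the maps in the chain \eqref{eq:feature_chain}, each of which will be shown to be $C^1$ on that ball. Compactness then makes the continuous Jacobian bounded. Convexity of the ball lets us write
\begin{equation*}
\bar\Phi(\mathbf p_1)-\bar\Phi(\mathbf p_2)
=\int_0^1\mathbf J_{\bar\Phi}\big(\mathbf p_2+t(\mathbf p_1-\mathbf p_2)\big)(\mathbf p_1-\mathbf p_2)\,dt,
\end{equation*}
which is done in the real augmentation so that the complex coordinates become ordinary real vectors.

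The links are handled in order.

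\textbf{(i) Channel.} The map $\mathbf p\mapsto\mathbf H_k$ is $C^1$ by \eqref{eq:path_expansion}--\eqref{eq:channel_derivative}. The exponentials and steering vectors are smooth in $(\tau_\ell,\boldsymbol\phi_\ell)$, and the paths have $C^1$ parameters with a fixed path set $L$, because no birth or visibility change occurs.

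\textbf{(ii) Receive covariance.} The map $\mathbf R_r$ is a polynomial in $(\mathbf H_k,\mathbf H_k^H)$, so it is $C^1$ with differential \eqref{eq:receive_derivative}.

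\textbf{(iii) Dominant eigenvector.} Since $\lambda_1(\mathbf R_r)$ is simple, the rank-one projector $\mathbf u\mathbf u^H$ is a $C^1$ (indeed analytic) function of the Hermitian matrix $\mathbf R_r$ near $\mathbf R_r(\mathbf p_0)$. One route is the Riesz contour integral
\begin{equation*}
\frac{1}{2\pi j}\oint(z\mathbf I-\mathbf R_r)^{-1}dz
\end{equation*}
around $\lambda_1$. The other is the Davis--Kahan-type bounds \cite{ref9,ref10}, which give $\|D(\mathbf u\mathbf u^H)\|\le 2\|D\mathbf R_r\|/\delta$, where $\delta$ is the eigengap. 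To pass to a vector, fix a coordinate with $\mathbf u(\mathbf p_0)^H\mathbf e\neq0$ and define
\begin{equation*}
\mathbf u=\frac{\Pi\mathbf e}{\|\Pi\mathbf e\|},\qquad \Pi=\mathbf u\mathbf u^H.
\end{equation*}
This is the smooth phase convention, and it is $C^1$ wherever $\Pi\mathbf e\ne0$, which holds near $\mathbf p_0$ by continuity.

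\textbf{(iv) Effective channel and coefficients.} The product rule gives $\mathbf X$ as $C^1$ with differential \eqref{eq:effective_derivative}. Then $\mathbf G$ is a fixed linear image of $\mathbf X$, by \eqref{eq:coefficient_derivative}.

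\textbf{(v) Selection.} The port energies $|g_j|^2$ are continuous. A positive gap between the $P$th and $(P+1)$th largest energies at $\mathbf p_0$ therefore persists on a small ball, so $\Omega_P$ is constant there. Selection is then a fixed coordinate projection, and $\bar\Phi=\mathbf S\,\operatorname{vec}(\mathbf G)$ is $C^1$.

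A constant can be tracked explicitly by bounding each factor uniformly on the ball:
\begin{equation*}
L_\Phi\le \sup\Big(\sum_k\|\mathcal J_{H_k}\|\Big)\Big(1+\tfrac{2}{\delta}\,\sup\|D\mathbf R_r\|\,\max_k\|\mathbf H_k\|\Big).
\end{equation*}
The unitary $\mathbf W_\theta$, $\mathbf F_K$ and the projection do not increase norms. The second factor reflects the combiner term $\mathbf H_k^H D\mathbf u$, whose size is controlled by the eigengap.

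The main obstacle is step (iii), specifically the phase. The eigenvector is determined only up to $e^{j\theta}$, and a careless choice, such as an arbitrary numerical eigensolver output, is not even continuous. The proposition therefore has to be stated for the representative $\bar\Phi$ built from the projector-based normalization above, and the $C^1$ property must be verified for that construction rather than for $\mathbf u$ itself. A secondary point is shrinking the neighborhood three times: once for the eigengap, once for $\Pi\mathbf e\neq0$, and once for the support gap. The intersection is still a ball around $\mathbf p_0$ on which the compactness argument applies.
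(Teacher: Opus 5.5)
Your proposal is correct and follows essentially the same route as the paper. The paper likewise treats \eqref{eq:channel_derivative}--\eqref{eq:coefficient_derivative} as a composition of continuously differentiable maps under the eigengap and fixed-support conditions, bounds the derivatives on a small closed convex neighborhood, and integrates along the segment; your projector-based phase convention and support-gap argument make explicit what the paper only asserts.

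One small caveat concerns only your optional explicit constant. The normalization $\Pi\mathbf e/\|\Pi\mathbf e\|$ adds a phase-derivative term of size up to $\|D\mathbf R_r\|/(\delta\,|\mathbf u^H\mathbf e|)$, so the bare factor $2/\delta$ is not justified for an arbitrary coordinate vector $\mathbf e$. Taking $\mathbf e=\mathbf u(\mathbf p_0)$ and a small enough ball avoids this, and the existence of $L_\Phi$ is unaffected in any case.
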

\begin{proof}
Equations~\eqref{eq:channel_derivative}--\eqref{eq:coefficient_derivative} are compositions of continuously differentiable maps under the stated eigengap and support conditions. On a sufficiently small closed convex neighborhood their derivatives are bounded. Integrating the feature derivative along the line segment joining two points gives \eqref{eq:phi_lipschitz}.
\end{proof}

To study identifiability under the matching rule, define the phase-orbit residual
\begin{equation}
\begin{aligned}
\delta(\mathbf x,\mathbf y)&=\min_\theta\|\mathbf x-e^{j\theta}\mathbf y\|\\
&=\sqrt{\|\mathbf x\|^2+\|\mathbf y\|^2-2|\mathbf x^H\mathbf y|}.
\end{aligned}
\label{eq:phase_residual}
\end{equation}

For nonzero features, \eqref{eq:feature_distance} divides this residual by their mean norm. It removes common phase while retaining amplitude differences. Let \(\mathbf c(\mathbf p)=\bar\Phi(\mathbf p)\), embedded on the fixed support, and \(\mathbf c_0=\mathbf c(\mathbf p_0)\ne0\). The real-linear projection away from the common-phase tangent is
\begin{equation}
\Pi_{\mathbf c_0}(\mathbf v)=\mathbf v-j\mathbf c_0
 \frac{\operatorname{Re}\{(j\mathbf c_0)^H\mathbf v\}}{\|\mathbf c_0\|^2}.
\label{eq:phase_projection}
\end{equation}

This projection preserves the amplitude direction \(\mathbf c_0\).

\begin{lemma}[Local phase-invariant identifiability]
\label{lem:phase_identifiability}
Suppose the smooth, unquantized, fixed-support feature in Proposition~\ref{prop:lipschitz} has \(\|\mathbf c_0\|>\epsilon_{\mathrm{mach}}\) and satisfies
\begin{equation}
\sigma_{\min}\!\left(\Pi_{\mathbf c_0}D\mathbf c(\mathbf p_0)\right)>0,
\label{eq:identifiability}
\end{equation}
where the operator is represented in real coordinates. Then, on a sufficiently small convex neighborhood, the normalized phase-invariant distance \(d_0\) in \eqref{eq:feature_distance} satisfies
\begin{equation}
m\|\mathbf p_1-\mathbf p_2\|\le d_0(\mathbf p_1,\mathbf p_2)
\le L\|\mathbf p_1-\mathbf p_2\|,
\label{eq:phi_bilipschitz}
\end{equation}
for some \(m,L>0\), provided the feature norms exceed the numerical denominator floor.
\end{lemma}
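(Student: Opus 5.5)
The plan is to handle the two inequalities in \eqref{eq:phi_bilipschitz} separately, working with the unnormalized phase-orbit residual \(\delta\) from \eqref{eq:phase_residual}, and then transfer the bounds through the denominator of \eqref{eq:feature_distance}. Choose a closed convex neighborhood \(U\) of \(\mathbf p_0\) on which Proposition~\ref{prop:lipschitz} applies and on which \(\tfrac12\|\mathbf c_0\|\le\|\mathbf c(\mathbf p)\|\le 2\|\mathbf c_0\|\) holds. Such a \(U\) exists by continuity together with \(\|\mathbf c_0\|>\epsilon_{\mathrm{mach}}\). On \(U\) the denominator \(\max\{(\|\mathbf c_1\|+\|\mathbf c_2\|)/2,\epsilon_{\mathrm{mach}}\}\) then lies between two positive constants. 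Consequently, it suffices to prove \(m'\|\mathbf p_1-\mathbf p_2\|\le\delta(\mathbf c(\mathbf p_1),\mathbf c(\mathbf p_2))\le L'\|\mathbf p_1-\mathbf p_2\|\). For the upper bound, take \(\theta=0\) in the minimum of \eqref{eq:phase_residual}. This gives \(\delta\le\|\mathbf c(\mathbf p_1)-\mathbf c(\mathbf p_2)\|\), which is at most \(L_\Phi\|\mathbf p_1-\mathbf p_2\|\) by \eqref{eq:phi_lipschitz}.

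The lower bound is the main obstacle, because \(\delta\) minimizes over a phase \(\theta\) that depends on the pair of points. The plan is to control the optimal phase first. Let \(\theta^\star\) attain the minimum, so that \(e^{j\theta^\star}\) aligns \(\mathbf c_2\) to \(\mathbf c_1\). Since both vectors are close to \(\mathbf c_0\neq0\), the inner product \(\mathbf c_1^H\mathbf c_2\) is close to \(\|\mathbf c_0\|^2\). Hence \(\theta^\star\) is small, of order \(\|\mathbf p_1-\mathbf p_2\|\) plus a term that vanishes as \(U\) shrinks. Then write
\[
\mathbf c_1-e^{j\theta^\star}\mathbf c_2
=\big(\mathbf c_1-\mathbf c_2\big)-\big(e^{j\theta^\star}-1\big)\mathbf c_2 .
\]
Next expand \(\mathbf c_1-\mathbf c_2=D\mathbf c(\mathbf p_0)\Delta\mathbf p+r\), where \(\Delta\mathbf p=\mathbf p_1-\mathbf p_2\) and \(\|r\|\le\eta(U)\|\Delta\mathbf p\|\). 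The bound on \(r\) follows from uniform continuity of \(D\mathbf c\) via the mean-value integral, with \(\eta(U)\to0\). Similarly expand \((e^{j\theta^\star}-1)\mathbf c_2=j\theta^\star\mathbf c_0+O(|\theta^\star|\,\mathrm{diam}\,U+|\theta^\star|^2)\). Up to small errors, the residual vector is therefore \(D\mathbf c(\mathbf p_0)\Delta\mathbf p-j\theta^\star\mathbf c_0\).

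The key geometric fact is that, in real coordinates, \(\Pi_{\mathbf c_0}\) in \eqref{eq:phase_projection} is the orthogonal projection onto the complement of \(\operatorname{span}_{\mathbb R}\{j\mathbf c_0\}\). Hence, for every real \(t\),
\[
\|\mathbf v-t\,j\mathbf c_0\|\ge\|\Pi_{\mathbf c_0}\mathbf v\|.
\]
Apply this with \(\mathbf v=D\mathbf c(\mathbf p_0)\Delta\mathbf p\) and use \eqref{eq:identifiability}, writing \(\sigma_0=\sigma_{\min}(\Pi_{\mathbf c_0}D\mathbf c(\mathbf p_0))\). The leading term is then bounded below by \(\sigma_0\|\Delta\mathbf p\|\), uniformly in the unknown phase. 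The error terms are at most \((\eta(U)+C\,\mathrm{diam}\,U)\|\Delta\mathbf p\|\), using \(|\theta^\star|=O(\|\Delta\mathbf p\|)\). The bound on \(\theta^\star\) needs care. A cleaner route avoids it: minimize over all real \(t\) in the linearization instead of only over \(\theta^\star\), and compare \(\min_\theta\) with its tangent-line approximation. This reduces the argument to bounding the curvature of the phase orbit near \(\mathbf c_0\), whose radius is \(\|\mathbf c_0\|\). Shrinking \(U\) until the error coefficient is at most \(\sigma_0/2\) gives \(m'=\sigma_0/2\).

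Finally, divide both bounds by the denominator bounds obtained above. This gives \(m=m'/(2\|\mathbf c_0\|)\) and \(L=L'/(\tfrac12\|\mathbf c_0\|)\), which establishes \eqref{eq:phi_bilipschitz}. The hypothesis that the feature norms exceed the numerical floor ensures that the \(\max\) with \(\epsilon_{\mathrm{mach}}\) in \eqref{eq:feature_distance} is inactive. Likewise, the \(\max\{\cdot,0\}\) clipping does not change \(\delta\), since the quantity under the square root is exactly \(\delta^2\ge0\).
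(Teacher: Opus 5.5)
Your proof is correct, but it takes a different route from the paper's.

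\textbf{The paper's route.} The paper never handles the pairwise optimal phase. It fixes the gauge once with the chart $T_{\mathbf c_0}(\mathbf x)=\mathbf x(\mathbf c_0^H\mathbf x)^*/|\mathbf c_0^H\mathbf x|$. It notes that $\mathbf h=T_{\mathbf c_0}\circ\mathbf c$ has derivative $\Pi_{\mathbf c_0}D\mathbf c(\mathbf p_0)$ at $\mathbf p_0$. Condition \eqref{eq:identifiability} is then an ordinary full-rank Jacobian condition, and the integral derivative formula makes $\mathbf h$ bi-Lipschitz. The paper then proves a separate metric comparison, $\delta(\mathbf x,\mathbf y)\le\|T_{\mathbf c_0}(\mathbf x)-T_{\mathbf c_0}(\mathbf y)\|\le C\,\delta(\mathbf x,\mathbf y)$, valid for all vectors with $|\mathbf c_0^H\mathbf x|\ge s$.

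\textbf{Your route.} You linearize the optimally rotated residual directly. You then discard the unknown phase with the projection inequality $\|\mathbf v-t\,j\mathbf c_0\|\ge\|\Pi_{\mathbf c_0}\mathbf v\|$.

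\textbf{Trade-offs.} Your approach gives the lower constant $\sigma_0/2$ directly, without the chart-comparison factor $C=1+2b\|\mathbf c_0\|/s$. Your upper bound, obtained by taking $\theta=0$, needs no chart. The paper's approach avoids tracking $\theta^\star$ and the second-order term of $e^{j\theta}-1$. Its chart is also reused later, with the query $\mathbf s_o$ as reference, in the interpolation bound \eqref{eq:interpolation_bound}.

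\textbf{One step to tighten.} You justify the phase bound by saying both features are close to $\mathbf c_0$. That only shows $\theta^\star$ is small. An additive term that vanishes as $U$ shrinks, but not as $\|\Delta\mathbf p\|\to0$, would spoil the linear lower bound. What you actually use is $|\theta^\star|=O(\|\Delta\mathbf p\|)$, and it follows in one line:
\begin{itemize}
\item Since $\|\mathbf c_1-e^{j\theta^\star}\mathbf c_2\|=\delta\le\|\mathbf c_1-\mathbf c_2\|$, the triangle inequality gives $|e^{j\theta^\star}-1|\,\|\mathbf c_2\|\le2\|\mathbf c_1-\mathbf c_2\|\le2L_\Phi\|\Delta\mathbf p\|$.
\item For $|\theta|\le\pi$, $|e^{j\theta}-1|=2|\sin(\theta/2)|\ge2|\theta|/\pi$.
\item Together with $\|\mathbf c_2\|\ge\tfrac12\|\mathbf c_0\|$ on your $U$, this yields $|\theta^\star|\le2\pi L_\Phi\|\Delta\mathbf p\|/\|\mathbf c_0\|$.
\end{itemize}
With this bound in place your main argument is complete. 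The tangent-line alternative you sketch is not needed, and it would still require localizing $\theta$ first.
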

\begin{proof}
Choose a neighborhood with \(a\le\|\mathbf c(\mathbf p)\|\le b\), \(a>\epsilon_{\mathrm{mach}}\), and \(|\mathbf c_0^H\mathbf c(\mathbf p)|\ge s>0\). For features with \(\mathbf c_0^H\mathbf x\ne0\), define the phase-fixing map and the resulting position-dependent representative by
\begin{equation}
\begin{aligned}
T_{\mathbf c_0}(\mathbf x)&=\mathbf x
 \frac{(\mathbf c_0^H\mathbf x)^*}{|\mathbf c_0^H\mathbf x|},\\
\mathbf h(\mathbf p)&=T_{\mathbf c_0}(\mathbf c(\mathbf p)).
\end{aligned}
\label{eq:phase_chart}
\end{equation}

The derivative of \(\mathbf h\) at \(\mathbf p_0\) is \(\Pi_{\mathbf c_0}D\mathbf c(\mathbf p_0)\). By continuity, shrink the neighborhood so that the derivative variation is less than half its minimum singular value \(\mu\). The integral derivative formula gives
\(m_h\|\mathbf p_1-\mathbf p_2\|\le\|\mathbf h_1-\mathbf h_2\|\le L_h\|\mathbf p_1-\mathbf p_2\|\), with \(m_h=\mu/2\).

For scalar \(u,v\) with magnitudes at least \(s\),
\(|u/|u|-v/|v||\le2|u-v|/s\).
Apply this inequality to the phase factors of \(\mathbf x\) and an optimally rotated \(\mathbf y\). Since the chart is invariant to that rotation,
\begin{equation}
\begin{aligned}
\delta(\mathbf x,\mathbf y)
&\le\|T_{\mathbf c_0}(\mathbf x)-T_{\mathbf c_0}(\mathbf y)\|\\
&\le C\delta(\mathbf x,\mathbf y),\qquad
C=1+\frac{2b\|\mathbf c_0\|}{s}.
\end{aligned}
\label{eq:chart_distance}
\end{equation}

The mean-norm denominator lies in \([a,b]\), giving \(m=m_h/(Cb)\) and \(L=L_h/a\).
\end{proof}

Condition~\eqref{eq:identifiability} requires two linearly independent position directions after removing common phase.

\subsection{Localization Error Analysis}
Fix a query \(\mathbf p^\star\) and a nonempty candidate index set \(\mathcal C\) after prior restriction and evaluation exclusions. Its position set is \(\mathcal D_c=\{\mathbf p_i:i\in\mathcal C\}\). Let \(\mathcal N_r(\mathbf p^\star)\) be the closed radius-\(r\) neighborhood within the position domain. The nearest geometric reference is
\begin{equation}
\mathbf p_{\mathrm{nn}}=\arg\min_{\mathbf p_i\in\mathcal D_c}
\|\mathbf p_i-\mathbf p^\star\|,\qquad
e_g=\|\mathbf p_{\mathrm{nn}}-\mathbf p^\star\|.
\label{eq:grid_error}
\end{equation}

For a complete square grid of spacing \(\Delta\), with the enclosing cell available, \(e_g\le\Delta/\sqrt2\). This grid bound requires the enclosing references to remain available after cropping, exclusions, and prior restriction. Also, \(e_g\) is a lower bound only for estimators constrained to return an admissible grid point; interpolated estimates can lie between references.

Let \(d_0\) be the normalized phase-invariant distance between unquantized coefficients defined in Lemma~\ref{lem:phase_identifiability}, with constants \(m,L>0\). Let \(d_q\) be the score computed from the encoded map and query, with the selected supports and port budget held fixed. On the local reference neighborhood assume
\begin{equation}
|d_q(\mathbf p^\star,\mathbf p_i)
-d_0(\mathbf p^\star,\mathbf p_i)|\le\varepsilon_d.
\label{eq:distance_distortion}
\end{equation}

The quantity \(\varepsilon_d\) bounds score distortion, not channel-reconstruction normalized mean-square error (NMSE). For exact floating-point evaluation of the same features, \(d_q=d_0\) up to arithmetic error.

\begin{proposition}[Local nearest-reference matching]
\label{prop:nearest}
Suppose \(\mathcal D_c\ne\varnothing\), \(\mathbf p_{\mathrm{nn}}\in\mathcal N_r(\mathbf p^\star)\), the local two-sided distance bounds hold there, and \eqref{eq:distance_distortion} holds for its admissible references. If every \(\mathbf p_i\in\mathcal D_c\setminus\mathcal N_r(\mathbf p^\star)\) satisfies
\begin{equation}
d_q(\mathbf p^\star,\mathbf p_i)>Le_g+\varepsilon_d,
\qquad \mathbf p_i\notin\mathcal N_r(\mathbf p^\star),
\label{eq:remote_margin}
\end{equation}
then any minimum-distance reference \(\hat{\mathbf p}_{\mathrm{NN}}\) is local and
\begin{equation}
\|\hat{\mathbf p}_{\mathrm{NN}}-\mathbf p^\star\|
\le (Le_g+2\varepsilon_d)/m.
\label{eq:nearest_bound}
\end{equation}
\end{proposition}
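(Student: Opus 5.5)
The argument compares the score of the geometric nearest reference $\mathbf p_{\mathrm{nn}}$ with that of any minimizer of $d_q$. It then transfers the resulting score bound into a position bound using the lower half of Lemma~\ref{lem:phase_identifiability}. Throughout, I take $\mathcal N_r(\mathbf p^\star)$ to be the neighborhood on which the two-sided bounds \eqref{eq:phi_bilipschitz} hold, with $\mathbf p^\star$ and all admissible references inside it. I also take $\varepsilon_d$ in \eqref{eq:distance_distortion} to hold for every local reference, in particular for $\mathbf p_{\mathrm{nn}}$ and for any local minimizer.

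\textbf{Step 1 (score of the nearest reference).} Since $\mathbf p_{\mathrm{nn}}\in\mathcal N_r(\mathbf p^\star)$, the upper bound in \eqref{eq:phi_bilipschitz} gives $d_0(\mathbf p^\star,\mathbf p_{\mathrm{nn}})\le L e_g$. Combining this with \eqref{eq:distance_distortion} gives $d_q(\mathbf p^\star,\mathbf p_{\mathrm{nn}})\le Le_g+\varepsilon_d$.

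\textbf{Step 2 (locality of the minimizer).} Let $\hat{\mathbf p}_{\mathrm{NN}}$ be any reference attaining $\min_{i\in\mathcal C}d_q(\mathbf p^\star,\mathbf p_i)$. The minimum exists because $\mathcal C$ is finite and nonempty. The minimizer satisfies $d_q(\mathbf p^\star,\hat{\mathbf p}_{\mathrm{NN}})\le d_q(\mathbf p^\star,\mathbf p_{\mathrm{nn}})\le Le_g+\varepsilon_d$. Every reference outside $\mathcal N_r(\mathbf p^\star)$ has score strictly larger than this by the margin condition \eqref{eq:remote_margin}. Hence $\hat{\mathbf p}_{\mathrm{NN}}\in\mathcal N_r(\mathbf p^\star)$. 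The inequality is strict, so the conclusion holds for every minimizer, whatever tie-breaking is used.

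\textbf{Step 3 (position bound).} Because $\hat{\mathbf p}_{\mathrm{NN}}$ is local, \eqref{eq:distance_distortion} applies to it and gives $d_0(\mathbf p^\star,\hat{\mathbf p}_{\mathrm{NN}})\le d_q(\mathbf p^\star,\hat{\mathbf p}_{\mathrm{NN}})+\varepsilon_d\le Le_g+2\varepsilon_d$. The lower bound of Lemma~\ref{lem:phase_identifiability} then gives $m\|\hat{\mathbf p}_{\mathrm{NN}}-\mathbf p^\star\|\le d_0(\mathbf p^\star,\hat{\mathbf p}_{\mathrm{NN}})$. Dividing by $m>0$ yields \eqref{eq:nearest_bound}.

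\textbf{Main obstacle.} The only delicate point is the applicability of the lemma and the distortion bound in Steps 1 and 3. Lemma~\ref{lem:phase_identifiability} is stated for a small convex neighborhood with fixed support, nonzero feature norms, and unquantized features. So the hypothesis ``the local two-sided distance bounds hold there'' must be read as saying that $\mathcal N_r(\mathbf p^\star)$ lies inside such a neighborhood and that all features there exceed the denominator floor. With that reading, $d_0$ is evaluated on features with a common support. The quantization effect is then carried entirely by $\varepsilon_d$, and no further argument is needed.
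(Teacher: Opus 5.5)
Your proof is correct and follows essentially the same route as the paper: bound the computed score of $\mathbf p_{\mathrm{nn}}$ by $Le_g+\varepsilon_d$, use the strict margin \eqref{eq:remote_margin} to exclude remote minimizers, then pass through \eqref{eq:distance_distortion} to get $d_0(\mathbf p^\star,\hat{\mathbf p}_{\mathrm{NN}})\le Le_g+2\varepsilon_d$, and finish with the lower bound of Lemma~\ref{lem:phase_identifiability}. Your explicit reading of ``the local two-sided distance bounds hold there'' spells out assumptions that the paper's terse proof leaves implicit.
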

\begin{proof}
The computed distance to the nearest geometric reference is at most \(Le_g+\varepsilon_d\). Equation~\eqref{eq:remote_margin} excludes remote minimizers. Optimality gives
\(d_0(\mathbf p^\star,\hat{\mathbf p}_{\mathrm{NN}})
\le Le_g+2\varepsilon_d\), and the local lower bound gives \eqref{eq:nearest_bound}.
\end{proof}

For the interpolator, weights satisfying \(w_i\ge0\) and \(\sum_iw_i=1\) give the exact inequality
\begin{equation}
\|\mathbf p_{\mathrm{int}}-\mathbf p^\star\|
\le\sum_{i\in\mathcal T_+}w_i\|\mathbf p_i-\mathbf p^\star\|.
\label{eq:convex_bound}
\end{equation}

For the interpolation bound, fix the query \(\mathbf s_o\). Assume a smooth, fixed-support neighborhood containing the true position \(\mathbf p^\star\) and all participating references, where \(|\mathbf c(\mathbf p)^H\mathbf s_o|\) stays bounded away from zero and above the alignment threshold. Define \(\mathbf h_o(\mathbf p)=T_{\mathbf s_o}(\mathbf c(\mathbf p))\) by \eqref{eq:phase_chart} and \(\boldsymbol\psi_o(\mathbf p)=[\operatorname{Re}\mathbf h_o(\mathbf p)^T,\operatorname{Im}\mathbf h_o(\mathbf p)^T]^T\). Set \(\mathbf z_\star=\boldsymbol\psi_o(\mathbf p^\star)\) and \(\mathbf J_\star=D\boldsymbol\psi_o(\mathbf p^\star)\), with differentiation only in position \(\mathbf p\). Write
\(\widetilde{\mathbf z}_i=\mathbf z_\star+\mathbf J_\star\Delta\mathbf p_i+\mathbf r_i\),
where \(\Delta\mathbf p_i=\mathbf p_i-\mathbf p^\star\).
For any nonnegative weights summing to one, if \(\sigma_{\min}(\mathbf J_\star)\ge m_z>0\), then
\begin{equation}
\|\mathbf p_{\mathrm{int}}-\mathbf p^\star\|
\le
\frac{\|\widetilde{\mathbf Z}\mathbf w-\mathbf z_\star\|
+\sum_iw_i\|\mathbf r_i\|}{m_z}.
\label{eq:interpolation_bound}
\end{equation}

Summing the expansions isolates \(\mathbf J_\star(\mathbf p_{\mathrm{int}}-\mathbf p^\star)\); the singular-value and triangle inequalities give the result. For an ideal query, \(\mathbf z_o=\mathbf z_\star\), and the residual term equals \(r_{\mathrm{int}}\max\{\|\mathbf z_\star\|,\epsilon_{\mathrm{mach}}\}\). With perturbed coefficients, add the query error and weighted reference errors measured in this same query-defined chart. The phase map is locally smooth under the nonzero-inner-product condition. The bound also requires stable support and a positive minimum singular value of the Jacobian.

\begin{corollary}[Quantization and near--far separation]
\label{cor:quantization_margin}
Fix the same nonempty near and far index subsets in \eqref{eq:feature_margin} before and after quantization. If \(|d_q-d_0|\le\varepsilon_q\) for every candidate and the unquantized margin satisfies \(\gamma_0>2\varepsilon_q\), then \(\gamma_q>0\): the quantized minimum near score remains smaller than the quantized minimum far score.
\end{corollary}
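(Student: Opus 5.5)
The definition \eqref{eq:feature_margin} is referenced but not shown above. I will take it in the natural form. Fix near and far index subsets $\mathcal I_n,\mathcal I_f\subseteq\mathcal C$, both nonempty. The unquantized and quantized margins are then
\begin{equation*}
\gamma_0=\min_{i\in\mathcal I_f}d_0(\mathbf p^\star,\mathbf p_i)-\min_{i\in\mathcal I_n}d_0(\mathbf p^\star,\mathbf p_i),\qquad
\gamma_q=\min_{i\in\mathcal I_f}d_q(\mathbf p^\star,\mathbf p_i)-\min_{i\in\mathcal I_n}d_q(\mathbf p^\star,\mathbf p_i).
\end{equation*}
The plan is a perturbation argument for the minimum operator. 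It is the same mechanism used in the proof of Proposition~\ref{prop:nearest}, with $\varepsilon_q$ in place of $\varepsilon_d$. The index sets and supports are held fixed, so the only change is the score values.

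The first step is the elementary fact that the minimum over a finite nonempty set is $1$-Lipschitz in the sup norm. Suppose $|a_i-b_i|\le\varepsilon$ for all $i\in\mathcal I$. Choose $i^\star$ attaining $\min_i b_i$. Then $\min_i a_i\le a_{i^\star}\le b_{i^\star}+\varepsilon=\min_i b_i+\varepsilon$. The symmetric argument gives the reverse inequality, so $|\min_i a_i-\min_i b_i|\le\varepsilon$. I apply this separately to $\mathcal I_n$ and to $\mathcal I_f$, using the hypothesis $|d_q-d_0|\le\varepsilon_q$ on every candidate. Nonemptiness is exactly what makes both minima well defined.

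The second step combines the two bounds. The quantized near minimum is at most the unquantized near minimum plus $\varepsilon_q$. The quantized far minimum is at least the unquantized far minimum minus $\varepsilon_q$. Subtracting gives $\gamma_q\ge\gamma_0-2\varepsilon_q$. This is positive when $\gamma_0>2\varepsilon_q$, which is the conclusion.

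There is no real obstacle here. The only care needed is to use the exact form of \eqref{eq:feature_margin}. If the margin is instead defined with a ratio or a weighted combination, the same Lipschitz-of-minimum argument still applies termwise, but the constant $2$ would have to be adjusted. I will also note that ties between minimizers do not matter, because the argument bounds minimum values rather than identifying which index attains them.
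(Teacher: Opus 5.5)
Your proposal is correct and follows the paper's proof: both bound the shift of each minimum by $\varepsilon_q$ and subtract to obtain $\gamma_q\ge\gamma_0-2\varepsilon_q>0$. The margin form you assumed matches the paper's definition exactly. Your explicit argument that the minimum is 1-Lipschitz just writes out the step the paper states in one line.
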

\begin{proof}
Quantization increases the minimum near score by at most \(\varepsilon_q\) and decreases the minimum far score by at most \(\varepsilon_q\). Thus \(\gamma_q\ge\gamma_0-2\varepsilon_q>0\).
\end{proof}
This condition preserves near--far separation; individual reference identities, Top-\(K_1\) membership, and interpolation weights may still change. With exact scale \(A\), the report remains partly continuous, but amplitude and phase rounding still invalidate a general strict local lower bound. The continuous-map results above apply before rounding.

Localization error depends on map sampling, feature representation, ambiguity, and local nonlinearity. Quantization further perturbs the matching scores. Grid sampling controls \(e_g\) and the available interpolation support. Port truncation changes the feature representation and its local sensitivity. Ambiguity determines whether a competing reference outside the true neighborhood has a smaller distance. Local nonlinearity governs the accuracy of feature interpolation within a chart. These mechanisms interact; they are not independent additive random errors.

For a fixed comparison index set \(\mathcal I\), let \(\mathcal I_{\mathrm n}=\{i\in\mathcal I:\mathbf p_i\in\mathcal N_r(\mathbf p^\star)\}\) and \(\mathcal I_{\mathrm f}=\mathcal I\setminus\mathcal I_{\mathrm n}\). The matching result uses \(\mathcal I=\mathcal C\); a database diagnostic can use its full eligible reference set. Define the margins using identical subsets:
\begin{equation}
\begin{split}
\gamma_t(\mathbf p^\star;r)={}&
\min_{i\in\mathcal I_{\mathrm f}}d_t(\mathbf p^\star,\mathbf p_i)\\
&-\min_{i\in\mathcal I_{\mathrm n}}d_t(\mathbf p^\star,\mathbf p_i),
\quad t\in\{0,q\}.
\end{split}
\label{eq:feature_margin}
\end{equation}

Both subsets must be nonempty for the margin to be defined. A positive margin indicates a strict nearest-reference advantage for the local set, rather than correctness of every retrieved point or the final interpolation.

On a smooth chart, the remainder is
\begin{equation}
\mathbf r(\Delta\mathbf p)=
\bar\Phi(\mathbf p^\star+\Delta\mathbf p)-\bar\Phi(\mathbf p^\star)
-\mathbf J_{\bar\Phi}(\mathbf p^\star)\Delta\mathbf p.
\label{eq:nonlinearity}
\end{equation}

Continuous differentiability gives \(o(\|\Delta\mathbf p\|)\); an \(O(\|\Delta\mathbf p\|^2)\) bound requires a Lipschitz Jacobian. A support switch is not a small second-order remainder of the same chart. This distinction is essential for interpreting the local diagnostic in Section~V.

Reconstruction distortion, margin statistics, and finite-difference diagnostics describe complementary aspects of the observed errors. We use these diagnostics to summarize empirical trends across port budgets.

\section{Numerical Results}
\label{sec:results}
\subsection{Simulation Setup}

The channel maps are generated from DeepMIMO \cite{alkhateeb2019deepmimo} for \texttt{city\_0\_newyork\_3p5} and \texttt{asu\_campus\_3p5}, referred to as City and Arizona State University (ASU). Table~\ref{tab:exp_params} lists the arrays, frequency samples, and localization settings. The first-to-last frequency span is 1.50~MHz, whereas \(K\Delta f=1.56\)~MHz is the DFT sampling-bandwidth parameter. Reference channels and, unless otherwise stated, query estimates are ideal. Fig.~\ref{fig:coefficient_estimation} separately evaluates noisy query coefficients. Fig.~\ref{fig:scene_maps} shows the active layouts.

\begin{figure}[!t]
\centering
\subfloat[ASU campus.]{\includegraphics[width=0.78\columnwidth]{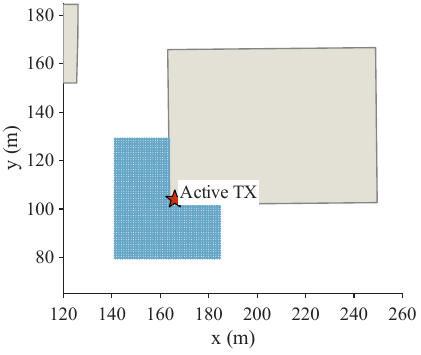}}\\
\subfloat[City.]{\includegraphics[width=0.78\columnwidth]{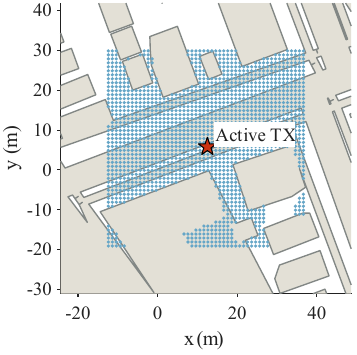}}
\caption{DeepMIMO layouts with building outlines, valid locations, and the active transmitter.}
\label{fig:scene_maps}
\end{figure}

Both maps use one transmitter, identified by source tags \texttt{t001/tx000}, and receiver set \texttt{r000}; no multi-BS observations are combined. The BS coordinates are \((12.4973,5.8259,6)\)~m in City and \((166,104,22)\)~m in ASU. UE heights are fixed at 2~m and 1.5~m, respectively. The synthesis uses half-wavelength spacing, arrays in the coordinate \(xy\)-plane, and no UE rotation. Its response is \(\mathbf a_x\otimes\mathbf a_y\), with the y index varying fastest. For zenith \(\theta\) and azimuth \(\phi\), element \((m_x,m_y)\) has phase \(2\pi(0.5m_x\sin\theta\cos\phi+0.5m_y\sin\theta\sin\phi)\). The \(4\times1\) UE array lies along the x axis. These are the synthesis conventions applied to the supplied path angles; their physical frame alignment remains a dataset-registration assumption. In Fig.~\ref{fig:scene_maps}, beige polygons are top-down projections of the supplied building meshes, blue points are valid locations, and the star marks the active transmitter. Other transmitters are omitted; the view extends beyond the sampled region to show surrounding structures.

Seed-1 coarse priors are fixed across methods and noise realizations. Leave-one-out removes the test reference; sparse maps retain the full test set. Offset directions are uniform on \([0,2\pi)\), with magnitudes uniform in radius on \([3,10]\)~m and \(R=10\)~m by default. The joint prior/search experiment sets both offset magnitude and \(R\) to \(2,4,6,8,10\)~m, placing the true position on the search-circle boundary and potentially creating asymmetric neighborhoods.

All query errors enter
\begin{equation}
\mathrm{RMSE}=\sqrt{\frac1{N_{\mathrm{test}}}
\sum_{n=1}^{N_{\mathrm{test}}}\|\hat{\mathbf p}_n-\mathbf p_n^\star\|^2}.
\label{eq:rmse_metric}
\end{equation}

An empty candidate set or a nonfinite estimate triggers a return to the query's coarse position. These cases are counted as failures and included in RMSE. Returning the matched anchor after rejected interpolation is treated as a valid estimate. Large finite errors are not clipped.

\begin{table}[!t]
\centering
\caption{Evaluation and Localization Settings}
\label{tab:exp_params}
\renewcommand{\arraystretch}{1.10}
\begin{tabularx}{\columnwidth}{@{}lX@{}}
\toprule
Parameter & Setting\\
\midrule
Valid locations & City: 1602; ASU: 1631\\
Carrier / arrays & 3.5 GHz; BS \(8\times8\), UE \(4\times1\)\\
Frequency samples & \(K=26\), \(\Delta f=60\) kHz\\
Frequency span & \((K-1)\Delta f=1.50\) MHz\\
Port scan & \(P=1,2,4,8,16,32\)\\
Map/prior comparisons & City: \(P=8\); ASU: \(P=32\)\\
Nominal map spacing & \(1,2,4,8\) m\\
Default prior / search & Magnitude \(U[3,10]\) m; \(R=10\) m\\
Retrieval / interpolation & \(K_{\mathrm r}=8\); \(\lambda=10^{-3}\); minimum 4 references\\
Acceptance checks & Residual \(\le0.35\); displacement \(\le\max(1.5h_{\mathcal T},1~\mathrm m)\)\\
ADPS-AWKNN & \(K_{\max}=16\), \(r_{\mathrm{th}}=10^{-4}\)\\
Reference maps / prior seed & Ideal / 1\\
Coefficient-error experiment & 30 noise realizations per estimation SNR\\
\bottomrule
\end{tabularx}
\end{table}

The ADPS-AWKNN baseline adapts Hu--Chen spectral fingerprinting \cite{hu2024adps} to a UPA and multiple receive antennas. For each receive antenna \(r\), reshape the full channel into \(\mathbf Z_r\in\mathbb C^{N_t\times K}\). The fingerprint is the normalized vectorization of
\begin{equation}
\mathbf A=
\left(\sum_{r=1}^{N_r}
|(\mathbf D_\theta\mathbf Z_r)\odot(\mathbf Z_r\mathbf T_\tau)|^2
\right)^{1/2}.
\label{eq:adps}
\end{equation}

Here \(\odot\) denotes elementwise multiplication, the square root is elementwise, and
\(\mathbf D_\theta=\mathbf D_x\otimes\mathbf D_y\) follows the channel's antenna ordering. The spatial transform entries are
\([\mathbf D_a]_{m,n}=N_a^{-1/2}e^{-j2\pi d_a\cos(\pi m/N_a)n}\),
with \(m=1,\ldots,N_a\), \(n=0,\ldots,N_a-1\), and element spacing \(d_a\) in wavelengths. The frequency transform has
\([\mathbf T_\tau]_{k,n}=K^{-1/2}e^{-j2\pi(f_k-f_1)n/(K\Delta f)}\),
\(n=1,\ldots,K\). Root-sum-square receive aggregation and the magnitude of the Hadamard product specify the adaptation.

Candidates in the same prior region are ranked by cosine similarity, clipped to \([0,1]\). Let \(s_1\ge\cdots\ge s_{|\mathcal C|}\) denote these scores and \(J_{\max}=\min(K_{\max},|\mathcal C|)\). The adaptive rule searches \(j=2,\ldots,J_{\max}\) for
\((s_1-s_j)/\max(\sum_{i=1}^{|\mathcal C|}s_i,\epsilon)\ge r_{\mathrm{th}}\) and retains \(j-1\) neighbors at the first crossing. The denominator sums over all local candidates, not just the searched prefix; \(\epsilon\) is the smallest positive normalized floating-point value. With no crossing, \(J_{\max}-1\) neighbors are retained, so the default \(K_{\max}=16\) permits at most 15 neighbors. A perfect first match or \(J_{\max}=1\) uses one neighbor. Retained similarities are normalized as position weights, with equal weights if their sum is zero. Both the adaptation and its \(K_{\max}=16\), \(r_{\mathrm{th}}=10^{-4}\) settings are held fixed.

We additionally evaluate a Top-\(M\) sparse ADPS-AWKNN extension of this baseline. For each nonnegative fingerprint \(\mathbf f\), define
\begin{equation}
\mathcal S_M=\operatorname{Top}_M(\mathbf f),\qquad
[\mathbf f_M]_d=f_d\,\mathbf1\{d\in\mathcal S_M\}.
\label{eq:topm_adps}
\end{equation}
The largest values are retained, with ties resolved by ascending dimension index. Map and query fingerprints are truncated independently. Cosine similarity uses their truncated norms; AWKNN parameters, neighbor selection, tie handling, and fallback rules are unchanged. We test \(M\in\{4,8,14,16,32,57,64,128,256,512,1024,1664\}\), with \(M=D=1664\) recovering dense ADPS. The storage comparison uses ideal estimates, unquantized double-precision fingerprints, and identical dense maps, queries, exclusions, and seed-1 priors. Elsewhere, ADPS-AWKNN denotes the dense baseline.

The spatial-smoothing MUSIC (SS-MUSIC) baseline combines direction estimates with received signal strength (RSS), and is denoted SS-MUSIC/RSS. It uses the full channel and spatial smoothing over sixteen overlapping \(5\times5\) BS subarrays. Its signal subspace has fixed rank one. It searches azimuths \(-180^\circ:2^\circ:178^\circ\) and zeniths \(92^\circ:2^\circ:178^\circ\), retains at most 12 local spectral peaks, and intersects their rays with the scene-wide median UE-height plane. Candidates inside the prior disk are scored using prior/RSS/spectrum weights \(1/0.8/0.2\), an RSS scale of 5~dB, and inverse-distance-square RSS interpolation from up to four map references. No true path angle, LOS label, or isolated first path is supplied. If no angular candidate is admissible, the coarse-prior fallback applies.

\subsection{Localization Accuracy and Map Storage}
We compare reference-map thinning and joint changes of the prior and search radius in Fig.~\ref{fig:accuracy}. Nominal spacings of \(1,2,4,8\)~m retain every corresponding index along the sorted unique coordinate axes, with a common test set. Supplementary Table~S1 gives the full map-density results.

\begin{figure*}[!t]
\centering
\includegraphics[width=2\columnwidth]{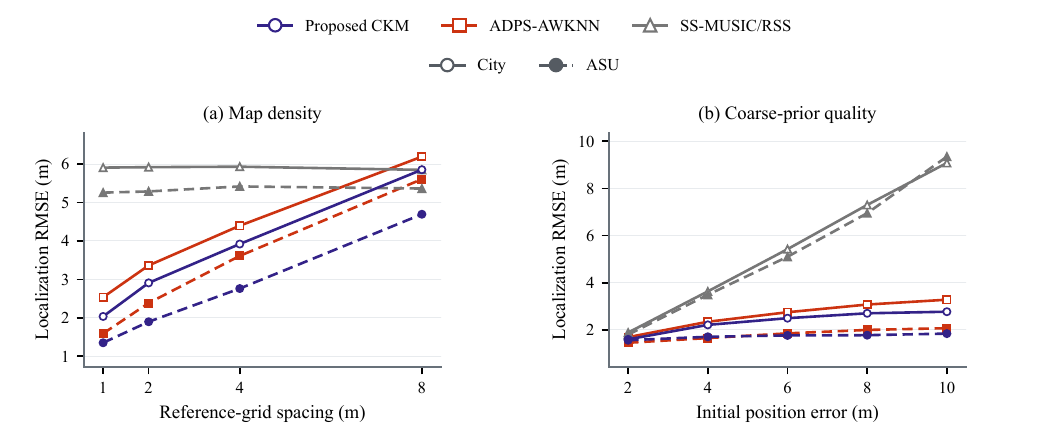}
\caption{Localization RMSE versus (a) map spacing and (b) joint prior-error magnitude and search radius.}
\label{fig:accuracy}
\end{figure*}

The proposed method has lower RMSE than dense ADPS-AWKNN at every tested spacing. Thinning degrades both methods. City RMSE rises by approximately 3.82~m from 1 to 8~m spacing. At 8~m, it is effectively tied with SS-MUSIC/RSS (5.8543 versus 5.8496~m). MUSIC depends less on map density because references support RSS-based candidate selection rather than fingerprint matching.

In Fig.~\ref{fig:accuracy}(b), increasing the offset magnitude and search radius together admits more references while making candidate selection harder. At 10~m, the proposed and ADPS-AWKNN RMSEs are 2.7732 and 3.2824~m in City, and 1.8444 and 2.0743~m in ASU. At the 2~m ASU setting, ADPS-AWKNN is better, 1.4557 versus 1.5644~m. These trends reflect joint changes in prior error and search radius, with the scene-specific port budgets held fixed.

With identical default-prior retrieval and thresholds, Top-1, unaligned NNLS, and aligned NNLS give City RMSEs of 2.2994, 2.2394, and 2.0352~m, and ASU RMSEs of 1.6527, 1.5921, and 1.3500~m. Alignment reduces P90 from 3.6056 to 3.1806~m in City and from 2.2361 to 2.0000~m in ASU. Independently rotating every query and reference in three trials per scene changes positions by less than \(1.3\times10^{-13}\)~m, with unchanged retrieval and acceptance. At the 10~m boundary-prior setting, Top-1 gives RMSEs of 2.5616~m in City and 1.7593~m in ASU, compared with 2.7732~m and 1.8444~m for aligned interpolation. In this setting, the true position lies on the search-circle boundary, which can produce one-sided reference neighborhoods.

At 8~m map spacing, proposed and ADPS-AWKNN failures are 0.7491\% in City and 2.4525\% in ASU, versus zero at 1~m. SS-MUSIC/RSS has substantial fallback, approximately 23.60\% and 7.91\% at this spacing. All RMSE values include these coarse-position returns; Supplementary Fig.~S1 reports failure rates separately.

Fig.~\ref{fig:features} evaluates \(P=1,2,4,8,16,32\). Increasing from one to eight ports reduces RMSE from 2.7462 to 2.0352~m in City and from 2.8851 to 1.4167~m in ASU. Beyond eight ports, City nearly plateaus, while ASU gains another 0.0667~m at 32 ports. All queries produce finite outputs.

\begin{figure}[!t]
\centering
\includegraphics[width=\columnwidth]{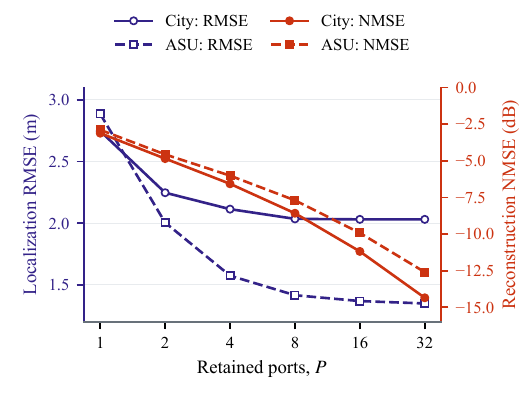}
\caption{Port-budget dependence of localization RMSE (left axis) and reconstruction NMSE (right axis).}
\label{fig:features}
\end{figure}

The reconstruction diagnostic uses \(\hat{\mathbf X}\) synthesized from the retained ports, and aligns it to \(\mathbf X\) only for offline assessment. With \(\mathbf x=\operatorname{vec}(\mathbf X)\) and \(\hat{\mathbf x}=\operatorname{vec}(\hat{\mathbf X})\),
\begin{equation}
\alpha=\frac{\hat{\mathbf x}^H\mathbf x}{\hat{\mathbf x}^H\hat{\mathbf x}},
\qquad
E_{\mathrm{rec}}=\frac{\|\mathbf x-\alpha\hat{\mathbf x}\|^2}{\|\mathbf x\|^2}.
\label{eq:reconstruction}
\end{equation}

We plot \(10\log_{10}(N^{-1}\sum_n E_{\mathrm{rec},n})\): the mean linear NMSE is converted to decibels. It evaluates the effective channel after receive combining. Between one and 32 ports, City improves from \(-3.1203\) to \(-14.3509\)~dB and ASU from \(-2.8916\) to \(-12.6000\)~dB. Reconstruction continues to improve after localization accuracy has largely saturated, so reconstruction fidelity alone does not determine positioning accuracy.

Fig.~\ref{fig:error_correlation} combines three raw diagnostics: linear reconstruction NMSE \(q_1\), ambiguity \(q_2=1-q_+\), and local nonlinearity \(q_3\). Here \(q_+\) is the fraction of all evaluated positions with a defined, positive unquantized margin \(\gamma_0\) in \eqref{eq:feature_margin}, using a 5~m neighborhood; \(q_3\) averages \(\|\Delta\mathbf c-\mathbf J\Delta\mathbf p\|/\max(\|\Delta\mathbf c\|,\epsilon_{\mathrm{mach}})\) over admissible neighborhoods. For the nondegenerate ranges used here,
\begin{equation}
\widetilde q_j=\frac{q_j-a_j}{b_j-a_j},\qquad
I_{\mathrm{diag}}=\sum_{j=1}^{3}\widetilde q_j.
\label{eq:diagnostic_normalization}
\end{equation}

All configurations share the component-wise normalization bounds specified in Supplementary Section~I.

Configurations with a nonfinite component are excluded from the composite-index plot. This applies to City at \(P=32\), whose localization RMSE remains in Fig.~\ref{fig:features}. The remaining five City and six ASU configurations yield Pearson correlations of 0.9943 and 0.9834. Supplementary Section~I specifies the neighborhood selection and finite-difference calculation of \(q_3\). Figure~\ref{fig:error_correlation} relates pre-interpolation feature diagnostics to the final localization RMSE.

\begin{figure}[!t]
\centering
\includegraphics[width=\columnwidth]{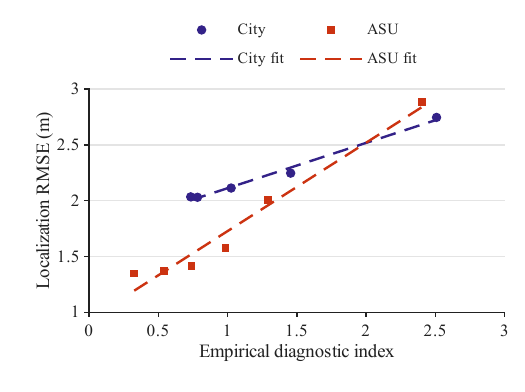}
\caption{Composite diagnostic index versus localization RMSE, with scene-specific linear fits.}
\label{fig:error_correlation}
\end{figure}

For the double-precision comparison, each feedback component contains a complex value and a \texttt{uint16} index, whereas each sparse ADPS component contains a real value and the same index type. The per-entry payloads in bytes are
\begin{equation}
\begin{aligned}
S_{\mathrm{feedback}}(P)&=18P,\\
S_{\mathrm{sparse\ ADPS}}(M)&=10M,\quad M<D,\\
S_{\mathrm{dense\ ADPS}}&=8D=13312.
\end{aligned}
\label{eq:storage}
\end{equation}

At \(M=D\), indices are implicit and the dense format applies. Fixed-size sparse entries retain their indices even when a selected value is zero. The feedback dictionary is shared, and support indices fully specify its outer descriptor. Coordinates, common metadata, and container overhead are excluded equally; temporary computation arrays are not map payloads.

\begin{table}[!t]
\centering
\caption{Per-Entry Storage and All-Sample RMSE}
\label{tab:storage}
\setlength{\tabcolsep}{3.5pt}
\begin{tabular}{@{}llrr@{}}
\toprule
Scene & Method & Bytes & RMSE (m)\\
\midrule
City & Feedback, \(P=8\) &144&2.0352\\
& Sparse ADPS, \(M=14\) &140&2.7724\\
& Dense ADPS &13312&2.5360\\
ASU & Feedback, \(P=32\) &576&1.3500\\
& Sparse ADPS, \(M=57\) &570&1.5811\\
& Dense ADPS &13312&1.5909\\
\bottomrule
\end{tabular}
\end{table}

Fig.~\ref{fig:storage_accuracy} plots all tested budgets. Top-\(M\) truncation substantially reduces ADPS storage: \(M=57\) uses 570 bytes, 95.7\% less than dense ADPS, while slightly improving RMSE in both scenes. Truncation changes cosine similarities and AWKNN neighbor selection, so RMSE need not vary monotonically with storage. Diamonds mark dense ADPS; enlarged markers highlight the 144/140-byte and 576/570-byte comparisons.

At similar storage budgets (Table~\ref{tab:storage}), the proposed method reduces RMSE relative to sparse ADPS by 26.6\% in City (144 versus 140 bytes) and 14.6\% in ASU (576 versus 570 bytes).

At similar accuracy in ASU, feedback with \(P=4\) gives 1.5752~m using 72 bytes, versus 1.5811~m and 570 bytes for sparse ADPS with \(M=57\), a storage reduction of 87.4\%. These results compare the complete localization methods. The proposed method also reuses communication feedback directly, whereas sparse ADPS constructs its spectrum from channel observations.

\begin{figure*}[!t]
\centering
\includegraphics[width=6.831in]{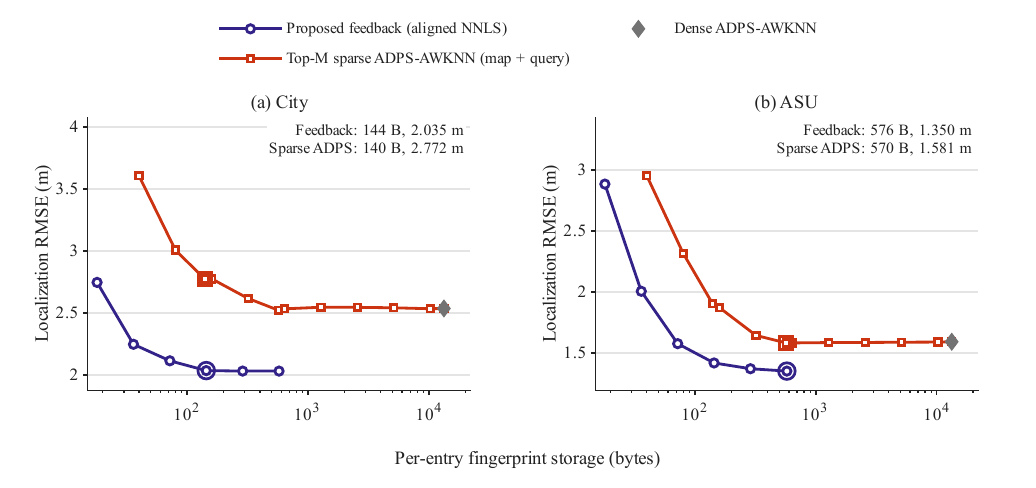}
\caption{Per-entry map storage versus localization RMSE in (a) City and (b) ASU.}
\label{fig:storage_accuracy}
\end{figure*}

\subsection{Quantization and Coefficient Estimation Error}
\label{subsec:quantization}
Both reference fingerprints and query coefficients are independently quantized using the same amplitude--phase quantizer, with localization and coarse priors unchanged. Support indices and the float64 scale \(A\) are retained exactly. Fig.~\ref{fig:quantization_ablation} compares amplitude-only, phase-only, and joint quantization against the floating-point baseline. Supplementary Fig.~S3 reports the query-only control with a floating-point reference map.

\begin{figure*}[!t]
\centering
\includegraphics[width=6.831in]{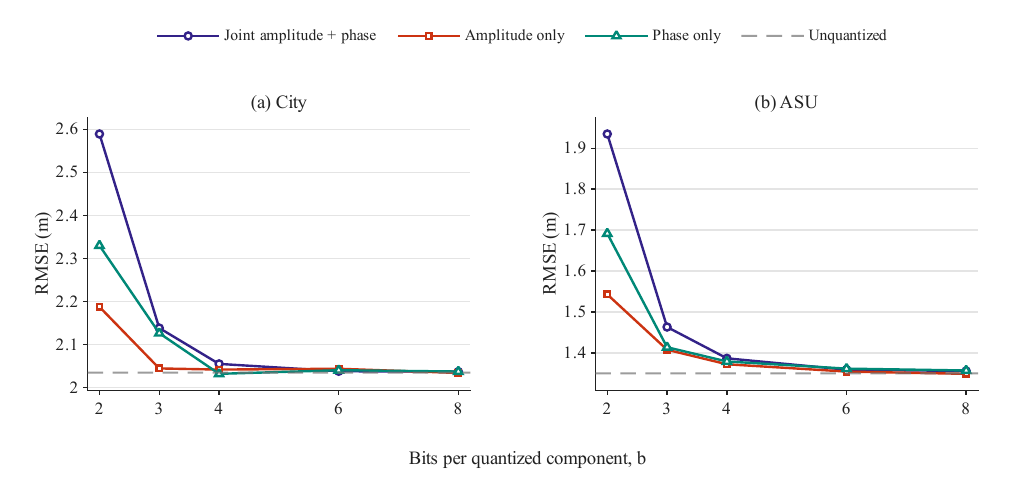}
\caption{Amplitude--phase quantization of map and query coefficients in (a) City and (b) ASU.}
\label{fig:quantization_ablation}
\end{figure*}

The quantizer is defined in \eqref{eq:amplitude_phase_normalization}--\eqref{eq:amplitude_phase_quantizer}. City uses eight ports and ASU uses 32, with the same seed-1 queries as the floating-point comparison. Joint quantization assigns \(b\) bits to each amplitude and phase component; component-only ablations retain the other component in floating point.

With two bits per component, joint quantization increases RMSE by 27.23\% in City and 43.31\% in ASU. At eight bits, RMSEs are 2.0374 and 1.3557~m, only 0.0022 and 0.0057~m above the floating-point baselines. These settings have zero failed outputs. Small changes across precision settings need not be monotonic.

Moderate coefficient precision therefore retains nearly the floating-point accuracy in these maps. A joint report's modeled payload is \(P(11+b_a+b_\phi)+64\) bits, excluding protocol metadata. The 11-bit index accounting differs from the byte-aligned 16-bit index format in Table~\ref{tab:storage}.

The quantization experiment assumes ideal coefficient estimation. To isolate estimation error, we now use unquantized reports and an ideal reference map, with support and receive combining fixed. For each query, we generate
\begin{equation}
\widetilde{\mathbf c}=\mathbf c+\mathbf e,\qquad
\mathbf e\mid\mathbf c\sim\mathcal{CN}(\mathbf0,\sigma_c^2\mathbf I_P),
\label{eq:feedback_estimation_error}
\end{equation}
and define the feedback-coefficient estimation signal-to-noise ratio (SNR) by
\begin{equation}
\rho_{\mathrm{fb}}=\frac{\|\mathbf c\|_2^2}{P\sigma_c^2},\qquad
\sigma_c^2=\frac{\|\mathbf c\|_2^2}{P\,10^{s/10}},
\label{eq:feedback_estimation_snr}
\end{equation}
where \(s\) is the coefficient estimation SNR in decibels, rather than wireless-link SNR at fixed receiver noise power. Localization receives only the noisy query coefficients. City uses \(P=8\) and ASU uses \(P=32\); maps, exclusions, coarse priors, and localization settings are unchanged. We run 30 independent noise realizations at each \(s\in\{0,5,10,15,20,30,40\}\)~dB, scaling the same standard Gaussian vector across SNRs within each realization. RMSE pools all queries and realizations, including coarse fallbacks. The 95\% bootstrap intervals use 10,000 resamples of the 30 realization-level MSEs, followed by a square root. They quantify noise variability for these fixed maps, queries, and priors.

\begin{figure}[!t]
\centering
\includegraphics[width=3.465in]{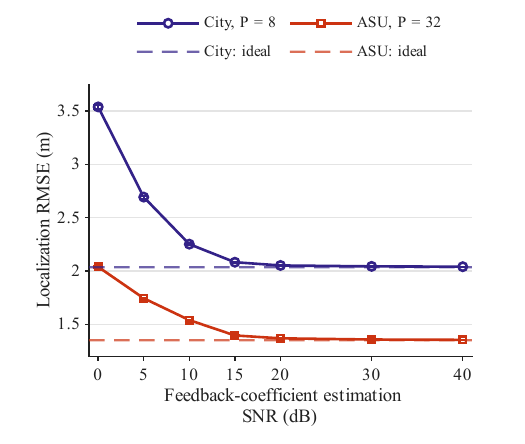}
\caption{Localization RMSE versus feedback-coefficient estimation SNR, with 95\% bootstrap intervals.}
\label{fig:coefficient_estimation}
\end{figure}

In Fig.~\ref{fig:coefficient_estimation}, RMSE decreases with estimation SNR. At 0~dB, City and ASU give 3.5385 and 2.0382~m, compared with ideal-estimation values of 2.0352 and 1.3500~m. At 30~dB, both are within 0.52\% of their baselines. No query falls back to the coarse prior, but interpolation acceptance falls from 72.28\% and 84.86\% without error to approximately 0.052\% and 0\% at 0~dB. At low SNR, most outputs are therefore matched anchors. Solid curves show pooled RMSE, bands and error bars show the bootstrap intervals, and dashed lines mark ideal-estimation baselines.

\FloatBarrier
\section{Conclusion}
Beam feedback coefficients provide compact CKM fingerprints for coarse-prior-guided localization. Phase-invariant matching and query-aligned interpolation reuse communication reports without reconstructing full channels. The analysis characterizes local identifiability and finite-map errors. Compared with dense ADPS-AWKNN, the proposed method uses substantially less storage and achieves lower RMSE in both scenes. It also achieves lower RMSE than Top-\(M\) sparse ADPS-AWKNN at similar storage budgets. Eight-bit amplitude and phase quantization remains close to floating-point performance; coefficient estimation at 30~dB increases RMSE by less than 0.52\% in both scenes. Future work should examine support-selection and receive-combining errors, protocol-level reporting, and the role of neighborhood geometry in interpolation.

\bibliographystyle{IEEEtran}
\bibliography{ckm_refs}
\end{document}